\documentclass[
nobibnotes,
reprint,
superscriptaddress,
 amsmath,amssymb, 
 aps,
 prl,
floatfix
]{revtex4-2}

\makeatletter
\AtBeginDocument{\let\LS@rot\@undefined}
\makeatother

\usepackage[a4paper, margin=1in]{geometry}
\usepackage{xr}
\usepackage{graphicx} 
\usepackage{subcaption}
\usepackage{amsmath,amsthm,amssymb,amsfonts,mathtools,mathrsfs}
\usepackage{physics}
\usepackage{color}
\usepackage{appendix}
\usepackage{xcolor}
\usepackage{hyperref}
\hypersetup{breaklinks=true, colorlinks=true, linkcolor={magenta!100!black}, citecolor={blue!90!black}, urlcolor={blue!90!black}}
\usepackage{tabularx}
\usepackage{cleveref}
\usepackage{booktabs}
\usepackage{tikz}
\usepackage{comment}
\usepackage[normalem]{ulem}
\usetikzlibrary{positioning, arrows.meta, calc, backgrounds,               decorations.pathreplacing}
\usepackage{tabularx}
\usepackage{makecell}
\usepackage{pdfpages} 
\usepackage{pgffor}  
\usepackage{amsthm}
\usepackage{cleveref}
\usepackage[T1]{fontenc}

\usepackage{ragged2e}   
\DeclareCaptionFormat{capjust}{\justifying#1#2#3}
\usepackage{xurl}
\usepackage{soul}

\newtheorem{theorem}{Theorem}

\numberwithin{equation}{section}

\crefname{figure}{fig.}{figs.}
\Crefname{figure}{Fig.}{Figs.}
\crefname{subfigure}{fig.}{figs.}
\Crefname{subfigure}{Fig.}{Figs.}

\newcommand{\vars}[1]{\mathcal{#1}}

\newcommand{\hash}{f}

\newcommand{\brac}[1]{\left( #1 \right)}

\newcommand{\errth}{e_{\mathsf{th}}}
\newcommand{\timeth}{t_{\mathsf{th}}}

\newcommand{\etab}{\eta_{\mathsf{b}}}
\newcommand{\pmisb}{p_{\mathsf{mis,b}}}
\newcommand{\epscom}{\varepsilon_{\mathrm{com}}}
\newcommand{\epssou}{\varepsilon_{\mathrm{sou}}}
\newcommand{\advstrat}{\vars{S}}
\newcommand{\advset}{\vars{A}}
\newcommand{\measstrat}{\vars{M}}
\newcommand{\measstratprat}{\measstrat_{\mathrm{prac}}}

\newcommand{\pmisth}{p_{\mathsf{mis,th}}}
\newcommand{\pmis}{p_{\mathsf{mis}}}
\newcommand{\pdc}{p_{\mathsf{dc}}}
\newcommand{\respA}{z_0}
\newcommand{\respB}{z_1}

\newcommand{\Ndet}{N_{\mathrm{det}}}
\newcommand{\Nerr}{N_{\mathrm{err}}}
\newcommand{\Ndc}{N_{\mathrm{dc}}}
\newcommand{\Nmis}{N_{\mathrm{mis}}}
\newcommand{\verA}{V_0^Q}
\newcommand{\verB}{V_1^C}
\newcommand{\prover}{P}
\newcommand{\ploc}{X_P}

\usepackage{environ}
\usepackage{xcolor}
\usepackage{tcolorbox}
\tcbuselibrary{skins}

\newcounter{box}
\crefname{box}{Protocol}{Protocols}

\NewEnviron{protocol}[2][t]{
  \refstepcounter{box}
  \begin{table}[#1]
    \centering
    \begin{tcolorbox}[
      width=0.95\columnwidth,
      title={\textbf{Protocol \thebox:}\quad #2},   
      colbacktitle=cyan!60!black,              
      coltitle=white,
      colback=white!99!teal,   
      colframe=teal,
      boxrule=0.8pt,
      arc=2pt,
      left=6pt, right=6pt,
      top=6pt, bottom=6pt,
      fonttitle=\bfseries,
      halign=flush left,    
      sharp corners=south,
      enhanced,
      segmentation hidden
    ]
      \BODY
    \end{tcolorbox}
  \end{table}
}

\date{\today}

\makeatletter

\renewcommand{\emph}[1]{{#1}}

\begin{document}
\author{Wen Yu Kon}
\thanks{These authors contributed equally to this work.}
\affiliation{Global Technology Applied Research, JPMorganChase, New York, NY 10017, USA}

\author{Niccol\`{o} Bigagli}
\thanks{These authors contributed equally to this work.}
\affiliation{Qunnect Inc., 141 Flushing Ave, Ste 1110, Brooklyn, NY 11205-1005}

\author{Andrew Conrad}
\thanks{These authors contributed equally to this work.}
\affiliation{Global Technology Applied Research, JPMorganChase, New York, NY 10017, USA}

\author{Taylor Shields}
\affiliation{Global Technology Applied Research, JPMorganChase, New York, NY 10017, USA}

\author{Fatih Kaleoglu}
\affiliation{Global Technology Applied Research, JPMorganChase, New York, NY 10017, USA}

\author{Ignatius William Primaatmaja}
\affiliation{Global Technology Applied Research, JPMorganChase, New York, NY 10017, USA}

\author{Alexander Craddock}
\affiliation{Qunnect Inc., 141 Flushing Ave, Ste 1110, Brooklyn, NY 11205-1005}

\author{RJ Pisani}
\affiliation{Qunnect Inc., 141 Flushing Ave, Ste 1110, Brooklyn, NY 11205-1005}

\author{Mael Flament}
\affiliation{Qunnect Inc., 141 Flushing Ave, Ste 1110, Brooklyn, NY 11205-1005}

\author{Jude Seeber}
\affiliation{Global Technology Applied Research, JPMorganChase, New York, NY 10017, USA}

\author{Omar Amer}
\affiliation{Global Technology Applied Research, JPMorganChase, New York, NY 10017, USA}

\author{Charles Lim}
\affiliation{Global Technology Applied Research, JPMorganChase, New York, NY 10017, USA}

\author{Xinhua Ling}
\affiliation{Global Technology Applied Research, JPMorganChase, New York, NY 10017, USA}

\author{Rob Otter}
\affiliation{Global Technology Applied Research, JPMorganChase, New York, NY 10017, USA}

\author{Kaushik Chakraborty}
\email{kaushik.chakraborty@jpmchase.com}
\affiliation{Global Technology Applied Research, JPMorganChase, New York, NY 10017, USA}

\author{Mehdi Namazi}
\email{mehdi@quconn.com}
\affiliation{Qunnect Inc., 141 Flushing Ave, Ste 1110, Brooklyn, NY 11205-1005}

\title{Loss-Tolerant Quantum Position Verification for Metropolitan Area Networks}
\begin{abstract}
A \textit{spacetime seal}, a cryptographic guarantee that a digital event has occurred at an approved location and time, can augment a digital signature with location attestation for legal, financial, and regulatory use cases. In adversarial settings any purely classical realization of such a seal can be spoofed~\cite{Chandran2009,Buhrman2014}.
Quantum position verification (QPV) offers a physics-based solution, exploiting the no-cloning theorem and the no-signaling principle to certify a party's spacetime coordinates \cite{Buhrman2014}.    
While the feasibility of QPV has been recently shown via entanglement-~\cite{Kavuri2026} and coherent light-based protocols~\cite{fan2026relativistic}, achieving loss tolerance for these schemes~\cite{ABB23,Llorenc2023} substantially increases implementation complexity at metropolitan scales.
Here, we introduce and experimentally demonstrate a loss-tolerant QPV (LT-QPV) protocol whose security is independent of channel loss.
We prove finite-size security against quantum polynomial-time entangled adversaries in the quantum random oracle model instantiated with cryptographically secure hash functions. 
Implemented entirely with commercial off-the-shelf components, our system certifies position within 22 minutes of net data collection time against a restricted adversary, with a clear path to real-time certification ($<1$s) with upgraded hardware. 
Our architecture, requiring only a single quantum verifier node alongside classical infrastructure, is naturally compatible with metropolitan-area quantum networks, establishing the foundations for scalable, physics-backed spacetime certification as a deployable service.

\end{abstract}

\maketitle

\vspace{2cm}

\newpage

\section{Introduction}

Digital signatures are as ubiquitous in our daily lives as in regulatory settings. While they greatly simplify transactions and agreements, they can bear no verifiable attestation of the location where a signing took place. For a growing set of applications, this missing attribute is of paramount importance. 
In particular, many digital systems operate in environments where the jurisdiction in which data processing, access, or authorization occurs may be relevant.
Data-residency rules and related governance considerations arise in connection with regulatory frameworks such as the EU General Data Protection Regulation (GDPR) \cite{GDPR}, financial-market regulations such as MiFID II \cite{eu_mifid2_2014_65, eu_rts25_2017_574, eu_mifir_600_2014}, export-control frameworks such as ITAR \cite{us_itar_22cfr_120_130}, and healthcare statutes such as HIPAA \cite{us_hipaa_45cfr_160_164}, all bind digital activity to specific jurisdictions. Similarly, sovereign-cloud mandates, sanctions compliance, electronic notarization, and trusted confidential computing (TCC) \cite{TCC_Whitepaper} require a verifiable, tamper-evident attestation of where a digital event has occurred.
Developing trustworthy protocols for location verification is therefore an important technical problem for modern digital service providers.     

Within current communication frameworks, time and location metadata accompanying digital signatures can be subverted. For instance, an attacker can redirect traffic via DNS or API-gateway compromise, interfere with sensor paths, mask IP addresses through proxies and VPNs, spoof MAC addresses, or manipulate system clocks. More fundamentally, no purely classical position-verification protocol can be secure against colluding adversaries \cite{Chandran2009,Buhrman2014}. These limitations motivate the need to expand current techniques to include beyond-classical capabilities. 

A potential approach to provide secure location information is quantum position verification (QPV), a class of protocols that can be used to generate a \textit{spacetime seal}, a certified, publicly verifiable stamp of the time and location of a signing event whose security rests on physical principles rather than on the integrity of any particular network. A powerful technique, QPV can go beyond spacetime certification and support location-based access control, position-based key exchange \cite{Buhrman2014,Kon2025} and treaty and inspection verification.

QPV exploits the no-cloning theorem and the no-signaling principle to certify that a party occupies a claimed position in spacetime. In a QPV protocol, two or more cooperating verifiers challenge a prover to demonstrate that it occupies a claimed position by requiring it to respond correctly to quantum and classical inputs within a narrow timing window dictated by the speed of light. While unconditional security against adversaries possessing unbounded entanglement is known to be
impossible~\cite{Beigi2011, Buhrman2014}, meaningful security can be achieved under physically motivated resource constraints such as limited
entanglement~\cite{Bluhm2022,Llorenc2023,Kavuri2026,Llorenc2025}, computational hardness assumptions~\cite{Asadi2025}, or polynomial resources (e.g. query access) in the quantum random oracle model~\cite{Unruh14}. A central practical obstacle, however, has been channel loss: photons transmitted over realistic quantum channels between the prover and verifier are frequently lost.
To manage channel loss, existing protocols have to either exponentially increase the number of measurement bases \cite{Llorenc2023} or employ complex Bell-state measurements~\cite{ABB23}.

Experimental efforts to realize QPV have accelerated in recent years,
with three concurrent demonstrations highlighting both the progress and
the remaining challenges. Kanneworff~\textit{et
al.}~\cite{Kanneworff_2025} reported a partial implementation of a single-photon SWAP-based
demonstration, providing a proof of concept but lacking
security against entangled adversaries. Kavuri~\textit{et
al.}~\cite{Kavuri2026} proposed and demonstrated a device-independent protocol that
removes trust assumptions on the measurement devices at the cost
of stringent experimental requirements. 
Their protocol limits the loss tolerance to $1/3$, matching that of Bell tests.
Finally, Fan-Yuan \textit{et al.}~\cite{fan2026relativistic} demonstrated a prepare-and-measure scheme using coherent states, offering some practical simplicity but relying on hollow-core fibers. 
These specialized fiber optics are more expensive and less widely deployed than standard telecom fiber, presenting practical challenges for large-scale deployment. 
Furthermore, at the metropolitan regime, partial channel loss-tolerance is achieved with an exponentially growing measurement complexity while full channel loss-tolerance requires a complex Bell state measurement using the scheme in Ref.~\cite{ABB23}, which presents further experimental challenges.

Here, we propose and experimentally demonstrate a loss tolerant QPV protocol (LT-QPV) whose security is independent of channel loss by shifting the duty of preparing and distributing entangled qubits to the prover.
Assuming the verifier possesses a trusted measurement device, only local losses at the prover location limit security and any no-detection rounds at the verifier can be safely discarded, providing intrinsic tolerance to channel loss. 
While this does not allow for arbitrary scaling of prover-verifier separation in practical settings where standard optical fibers are used for quantum communication, it does provide better scaling compared to simple prepare-and-measure QPV that is sufficient for implementation at metropolitan distances.
Indeed, most practical applications of QPV do not require quantum channels spanning hundreds or thousands of kilometers. 
We prove the finite-size security of LT-QPV against quantum polynomial-time (QPT) adversaries in the quantum random oracle model (QROM) \cite{BDFLSZ11}, covering both independently and identically distributed (i.i.d.) attacks, where an adversary performs the same attack strategy independently for every round (also known as individual attacks)~\cite{Gisin2002,Scarani2009}, and general coherent attacks under sequential repetition.
We further validate the
protocol with a short-distance experimental demonstration, certifying the location of an untrusted prover at a distance of $\sim$60m with our LT-QPV protocol using only standard, commercially-available hardware, as shown in Fig. \ref{fig:1}\textbf{a}, against an adversary performing i.i.d. attacks with limited query access.

\begin{figure*}[!ht]
    \centering
    \includegraphics[width=1\textwidth]{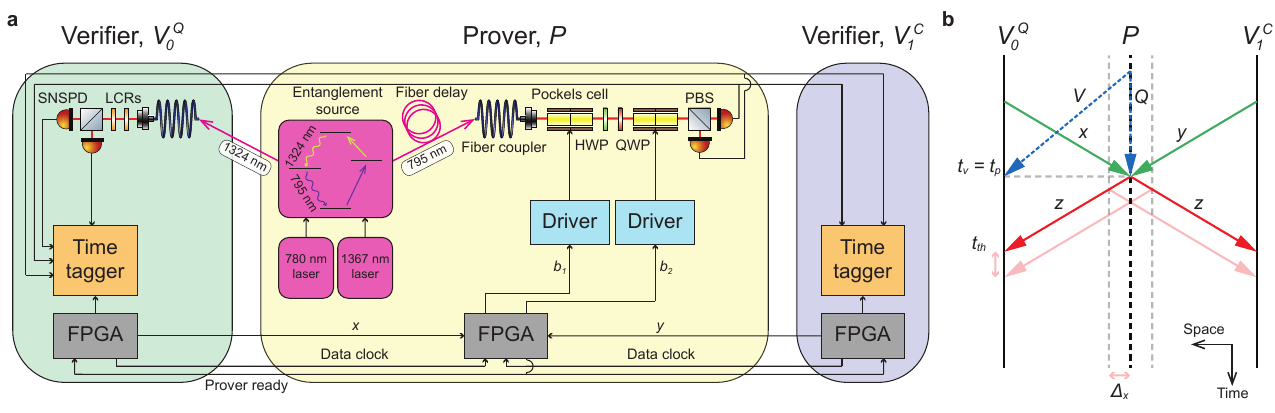}
    \caption{\textbf{a}, \textbf{Sketch of the experimental setup.} Our system consists of an untrusted prover $P$, a quantum verifier $V_0^Q$, and a classical verifier $V_1^C$. The prover generates an entangled pair and transmits 1324 nm (Telecom) photon to $V_0^Q$, then later measures the 795 nm (NIR) photon based on the challenge commands $x,y$ provided by $V_0^Q$ and $V_1^C$. The measurement result $z$ is transmitted back to both verifiers, which check if the result matches the measurement of the Telecom photon and determine the positional uncertainty based on the transmission times of $x,y$ and the receive times of $z$. \textbf{b}, \textbf{Spacetime diagram of the LT-QPV protocol}. 
    The quantum verifier $V_0^Q$, the quantum prover $P$, and the classical verifier $V_1^C$ are positioned on a line segment. The blue dotted arrow represents the quantum channel from $P$ to $V_0^Q$. The green and red lines represent classical channels. While the figure presented shows $t_V=t_P$, this is in general not necessary. More details on the choices of $t_V$ and $t_P$ can be found in Supplementary Section III B 1.}
    \label{fig:1}
\end{figure*}

The quantum communication requirements of the LT-QPV framework are naturally compatible with emerging quantum network architectures in which entanglement is distributed and measured at network nodes, and the classical requirements are plausibly realizable at metropolitan scales utilizing cellular towers or other standard wireless communication technologies. 
Together, these characteristics open a practical route towards realizing secure spacetime seals as an early application of metro-scale quantum networks and provide a basis for future developments.

\section{The Protocol}

\newcommand{\etath}{\eta_{\mathsf{th}}}
\newcommand{\regP}{Q}
\newcommand{\mismatch}{\theta}
\newcommand{\error}{\delta}
\newcommand{\transmission}{\eta}

We consider a setting where two verifiers, $\verA$ and $\verB$, cooperate to certify the location of a prover $\prover$ at a claimed position between them, as illustrated in Fig.~\ref{fig:1}. One verifier, $\verA$, is quantum-enabled, i.e. they are able to measure the state of a qubit, while the other, $\verB$, is fully classical.
The LT-QPV protocol is comprised of repeated QPV rounds during which classical and quantum information is exchanged between the three nodes.
In each round, the prover generates a Bell state
$|\Phi^{+}\rangle = \frac{1}{\sqrt{2}}(|00\rangle + |11\rangle)$, where 0 and 1 refer to any orthogonal qubit states (e.g. horizontal and vertical photon polarization, H and V). One qubit (subsystem $Q$) is retained by the prover, the other (subsystem $V$) is transmitted through a quantum channel
to verifier $\verA$. Concurrently, verifiers $\verA$ and $\verB$ issue
randomly chosen classical challenges $x$ and $y$, timed to arrive at
the prover's claimed position simultaneously. A shared public hash function $f$, such as SHA3~\cite{SHA3}, maps the pair of challenges $(x, y)$ to a measurement basis
$\alpha = f(x, y)$, selected from $m$ possible bases. 
Verifier $\verA$ holds a fully characterized measurement device that performs projective measurements in basis $\alpha$ with basis vectors
\begin{equation}
  |\psi^\alpha_0\rangle = \cos\!\left(\frac{\theta_\alpha}{2}\right)
  |0\rangle + \sin\!\left(\frac{\theta_\alpha}{2}\right)
  e^{i\phi_\alpha}|1\rangle,
  \label{eq:basis}
\end{equation}
and its orthogonal state $|\psi^\alpha_1\rangle$, where
$\theta_\alpha$ and $\phi_\alpha$ are the polar and azimuthal angles
on the Bloch sphere for basis $\alpha$. 
Valid rounds occur when the  measurement apparatus at both the prover and $\verA$ is set to the same basis, and $\verA$ detects a qubit. 
If the prover also detects a qubit, they forward the measurement result, $z$, to both verifiers. 
After $N$ valid rounds, the verifiers note the number of rounds when the prover had a detection, $\Ndet$, the number of error rounds in which the prover’s and verifier’s outcomes disagree, $\Nerr$, the number of double-click events at $\verA$, $\Ndc$, and the number of rounds the prover's responses to each verifier are mismatched (i.e. the $z$ values received by $\verA$ and $\verB$ are different), $\Nmis$.
The prover's location is certified by checking that: (1) the prover efficiency (the ratio of valid rounds that are detected by the prover) meets a minimum threshold $\eta=\Ndet/N \geq \eta_{\mathrm{th}}$, (2) the effective quantum bit error rate (QBER) is below a maximum threshold, $\text{QBER} = (N_{\mathrm{err}} + N_{\mathrm{dc}})/N_{\mathrm{det}} \leq e_{\mathrm{th}}$, and (3) $\Nmis$ is below a maximum threshold, $\Nmis/N\leq\pmisth$. The protocol is detailed in the methods and Supplementary Information, and a spacetime diagram of a single
round is shown in Fig.~\ref{fig:1}\textbf{b}.
We note that if the classical and quantum communication speeds, $v_{cl}$ and $v_V$, are not at the speed of light, $c$, and the prover's processing time is non-zero, the prover $P$ can only be certified to be within a region.
We also note that we do not simply abort the protocol when mismatch is observed since in practice, mismatched caused by timing fluctuations where one value may arrive after the cut-off time or other imperfections such as noise in the classical channel may occur.

LT-QPV can be viewed as a source-independent version of a $f$-measure QPV protocol~\cite{Bluhm2022}.
The classical communications in both protocols are identical, and the main difference is in the quantum state preparation and transfer.
$f$-measure QPV has the verifier prepare qubit $Q$ to be sent to the prover, while we allow the prover to prepare and entangled state and send $V$ to the verifier.
As such, the channel loss for $f$-measure protocols is experienced by the qubit received by the prover, while the channel loss in LT-QPV is experienced by the verifier's qubit $V$ instead.
Switching to source-independence also requires a trusted measurement device instead of a trusted source at the verifier, and an additional entanglement source at the prover.

The central theoretical result of our protocol is that LT-QPV security is independent of channel loss between the prover and verifier by having the prover prepare and broadcast the entangled quantum state, assuming that the verifier holds a trusted measurement device.
We note that this does not imply arbitrary separation between the prover and verifier $\verA$ since the distance may indirectly affect the prover-side loss.
The loss-tolerance follows from the causal structure of the protocol: under the QROM, the measurement basis $\alpha$ is first revealed at the prover location $P$ at time $t_P$.
Since the detection event at $\verA$ is space-like separated from the basis revelation (see Fig.~\ref{fig:1}\textbf{b}), the verifier's detection decision is independent of $\alpha$ and no-detection rounds can be safely discarded. 
While our protocol is intrinsically tolerant to channel loss, local losses on the prover side (including source heralding efficiency) can degrade performance, with the tolerated prover efficiency scaling as $\eta\sim\frac{1}{m}$ in the asymptotic regime ($N\rightarrow\infty$) for zero QBER and uniform basis choice. 

We prove finite-size security against entangled QPT adversaries (adversaries with memory, computational power, entanglement and query access to $f$ bounded by a polynomial in the security parameter $n$) in the QROM model, establishing that the protocol is $\varepsilon_{\mathrm{sou}}$-sound and $\varepsilon_{\mathrm{com}}$-complete for both i.i.d.\ and general attacks.
The theorem and proof structure is presented in the Methods and the complete proof is provided in the Supplementary Information.
This indicates that an honest prover fails verification with probability at most $\varepsilon_{\mathrm{com}}$, while any QPT adversary passes verification with probability at most $\varepsilon_{\mathrm{sou}}$.
Notably, the security proof requires no assumptions on the prover's source, rendering the protocol source-independent.
Multi-photon attacks on $\verA$'s measurement device are handled
through universal squashing~\cite{FCL11}, which treats double-click
events as errors, and the monogamy of entanglement~\cite{Tomamichel2013} ensures that any
adversary attempting to correlate with both verifiers simultaneously incurs QBER exceeding the verification threshold.

\section{The Experiment}

After sketching the contours and the strengths of LT-QPV, we experimentally demonstrate its feasibility. 
The setup is comprised of quantum and classical layers connecting the prover to the two verifiers. The quantum layer provides and measures the entangled particles used in the protocol;
the classical layer handles the selection of measurement bases, the calculation of the uncertainty areas 
around the verifiers, and the collection, compiling, and analysis of measurement results. Fig. \ref{fig:1}\textbf{a} shows a diagram of the experimental setup. 

The qubits used in our protocol are polarization-entangled photon pairs generated by a warm rubidium entanglement source \cite{craddock2024high} placed at the prover's location. Each pair is constituted by a photon at 795 nm (near infrared, or NIR) and one at 1324 nm (telecom), emitted in the 
$\ket{\Phi^+}$ Bell state. The NIR photon is kept at the prover's location, delayed by a 42 m optical fiber delay line, and then coupled into a measurement station capable of measuring in 16 unique bases, a number that ensures no security issues for $\eta \gtrsim 6.25 \% $ for zero QBER and uniform basis choice. The telecom photon is coupled into a fiber deployed between the prover and $\verA$, after which it encounters its own measurement station. 

Low-latency communication and control between the verifier and prover nodes are realized using field-programmable gate arrays (FPGAs). For each QPV round, the verifiers' FPGAs transmit random four-bit challenge signals $x,y\in \{0,1\}^4 $, which are timed to arrive simultaneously at the prover's claimed location. The prover then selects a measurement basis $\alpha=f(x,y)$ out of the 16 possible measurement basis using a $16 \times 16$ randomly-generated lookup table stored in the FPGA's memory. 
To enact fast basis selection for each measurement we employ two back-to-back high-voltage Pockels cells, each selecting one of 4 unique polarization rotations and capable of nanosecond switching. QPV rounds are repeated at a 200 kHz rate, limited by the cells' drivers and clocked by the FPGAs. The probabilistic nature of the entanglement source makes it impossible to predict whether a classical round will lead to zero, one, or more than one valid rounds. More details on how the quantum and the classical layers cooperate to enact the QPV protocol are presented in the Methods.

\begin{figure}[ht]
    \centering
    \includegraphics[width = \columnwidth]{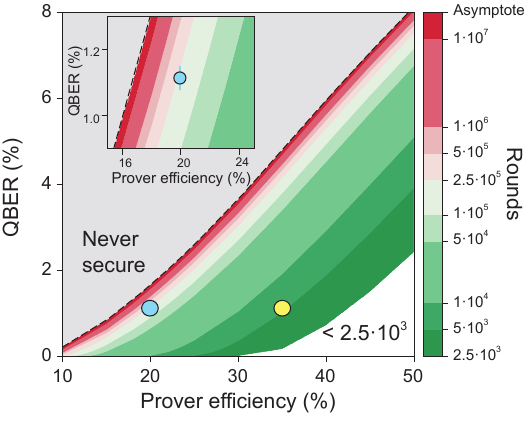}
    \caption{Performance of the QPV system required to achieve a security of $\varepsilon_{com}=10^{-2}$ and $\varepsilon_{sou}=10^{-2}$ against i.i.d. adversary assuming $\varepsilon_{ROM}=0$ with expected mismatch probability $\pmis=5.49\times10^{-6}$ and double click probability $\pdc=1.45\times10^{-4}$. The dotted line represents the performance requirements in the asymptotic regime and the solid lines represents the performance requirements for $N$ valid rounds. The blue dot denotes the performance of our QPV system implementing the LT-QPV protocol, which requires $\geq 2.55\times 10^5$ rounds to achieve security. The yellow dot demonstrates how the number of required rounds can be reduced by increasing the prover efficiency.}
    \label{fig:Operating_Curve}
\end{figure}

Prior to the QPV demonstration, we execute a full scale calibration round to characterize hardware performance and calculate the parameters under which the protocol can successfully certify the prover's location. 
Note that an independent calibration run is necessary to avoid violating the assumption that the performance thresholds and number of valid rounds $N$ are chosen prior to the protocol run, as explained in the Supplementary Information.
The calibration round is performed by running the full setup over 47 minutes of net data collection time, equal to 184 minutes of continuous operation. This overhead is due to our data acquisition software and can be shortened as explained in the Methods. 
During the calibration, we measure an overall QBER of $1.17\pm0.05\%$ and a prover's efficiency of $\eta = 20.10\pm0.07\%$.
As shown in Fig.~\ref{fig:Operating_Curve}, under an i.i.d. adversary model with limited query access and with security parameters $\varepsilon_{com}=10^{-2}$ and $\varepsilon_{sou}=10^{-2}$, a total of $N\geq 2.55\times 10^5$ rounds is required to certify the prover's position, and the observed performance must stay within the acceptance thresholds $\etath=19.86\%$, $\errth=1.31\%$, and $\pmisth=2.9\times 10^{-5}$. 

With these requirements established, we carry out the full LT-QPV demonstration. In under 22 minutes of net data collection time we complete $N=2.6\times 10^5$ rounds and measure an effective QBER of $1.23\pm0.05\%$, a prover's efficiency of $\eta= 19.96\pm0.08\%$, and only one mismatched response, $\pmis=3.8\times 10^{-6}$, clearing the required acceptance checks and certifying the prover's location. To highlight our setup's stability, we point out that our QPV data collection is performed three days after the calibration dataset with minimal intervention between runs. 

The positional uncertainty of the prover is determined by calculating the distance that light in free space could have traveled during the time elapsed between the transmission of the challenge commands and the reception of the result of the quantum measurement. In our QPV instance, the radii for verifiers $\verA$ and $\verB$ are found to be $r_0=58.1\text{ m}$, and $r_1=64.6\text{ m}$, respectively. We calculate the quantum-advantage ratio compared to classical ranging as in \hbox{\cite{Kavuri2026}} which is 2.18 (1-D), 2.43 (2-D), and 2.65 (3-D) for our LT-QPV demonstration. 
As shown in Fig. \ref{fig:Positional_Uncertainty}, our positional uncertainty region does not include the verifiers' locations. In contrast, the equivalent region of any classical position verification protocol always contains the line segment connecting the two verifiers even in the ideal case without any latency. In practice, network latency leads the positional uncertainty region to circumscribe an ellipsoid with foci at the verifiers \cite{Kavuri2026}.
This implies that our protocol securely verifies the position of the prover more precisely than any classical protocol, providing an indisputable quantum advantage \cite{Chandran2009}. 

\begin{figure*}[ht]
    \centering
    \includegraphics[width = 1 \textwidth]{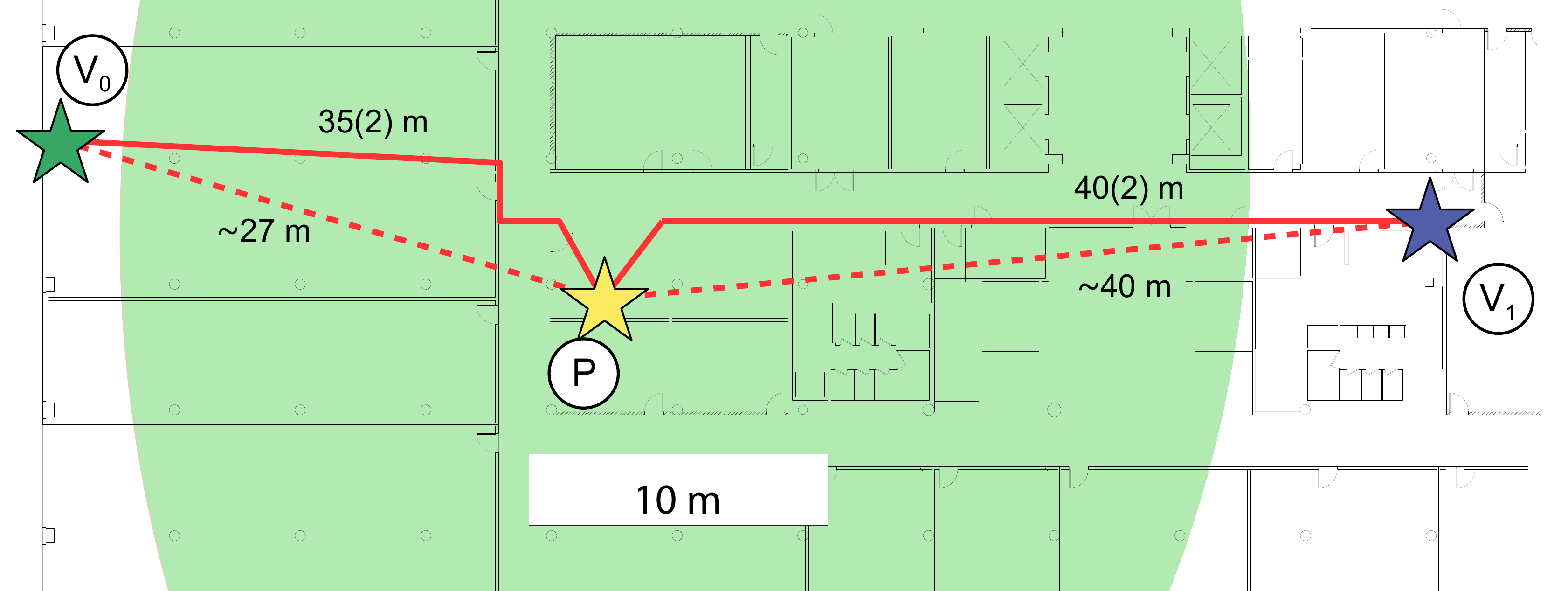}
    \caption{\textbf{Positional uncertainty diagram.} The area in which the prover could be located based on the roundtrip time of the QPV signals is shaded in green and overlaid over a diagram of our experiment, including the real locations of both verifiers and of the prover, the cable paths and the building's floorplan.}
    \label{fig:Positional_Uncertainty}
\end{figure*}

\section*{Discussion}

Thus far we have demonstrated that the LT-QPV can be implemented experimentally using commercially available hardware and standard telecom infrastructure against i.i.d. adversaries with limited query access. Together with finite-size security guarantees against both i.i.d. and general entangled adversaries under the QROM, and a source-independent security proof that places no trust assumptions on the prover's hardware, LT-QPV serves as a potential candidate for
positional verification at the metropolitan scale.

Our protocol occupies a distinct position within the current QPV experimental landscape. Prior demonstrations have each addressed only a subset of the main deployability challenges, namely security, channel loss tolerance at metropolitan scales, or hardware accessibility \cite{Kavuri2026, fan2026relativistic, Kanneworff_2025}. 
Device-independent approaches \cite{Kavuri2026} offer the strongest security guarantees with minimal assumptions but have low loss tolerance, fundamentally limiting their reach over metropolitan distances. 
The implementation by Ref.~\cite{fan2026relativistic} achieves practical simplicity but the underlying protocol~\cite{Llorenc2023} does not provide improved loss tolerance with increased number of measurement bases at practical $0.27\%$ QBER with current security analysis (see Supplementary ~Fig. 12).
Combining it with the loss-tolerance scheme proposed in Ref.~\cite{ABB23} introduces the use of a Bell state measurement and entanglement source that may not have sufficiently good performance for security when commercial devices are used, and would require additional modifications due to the possibility of post-selection on multi-photon events at arbitrary channel loss.
A detailed discussion of the challenges is presented in Supplementary Section VIII A.
Relying on hollow-core fiber remains expensive and sparsely deployed relative to standard telecom infrastructure. 
The single-photon SWAP-based demonstration \cite{Kanneworff_2025} provides valuable proof-of-concept results but lacks provable security against entangled adversaries.

LT-QPV addresses all these simultaneously by decoupling loss tolerance from measurement complexity through a causal argument rather than a hardware workaround. 
It provides sufficiently strong loss-tolerance guarantees that allows for security at metropolitan distances while not being prohibitively challenging to implement with commercial hardware. We note however that scaling to larger separations may still introduce other practical considerations such as higher polarization drift and longer storage times (discussed later) that may translate to additional noise, lower rates and prover-side loss, and the impact depends on the choice of implementation.
The trade-off accepted in this work is that security is proved under QROM and with a computational assumption.
This can be perceived as another form of resource bound on QPV which cannot be information-theoretic secure~\cite{Chandran2009,Buhrman2014}, where query access to $f$ is limited~\cite{Unruh14} instead of bounded entanglement~\cite{Bluhm2022,Llorenc2023,Kavuri2026}.
The computational assumption on the adversary only has to hold during the short QPV protocol run and do not face the same ``harvest-now-decrypt-later" issues that are present in secure key exchange.
Furthermore, QROM is well-precedented in classical \cite{fiatshamir87,bellare1996exact} and quantum cryptography \cite{BDFLSZ11,Unruh14,chiesa2021succinct} and is widely regarded as a practically sufficient assumption \cite{koblitz2015random}.
We also note that classical position verification is insecure even with computational assumptions~\cite{Chandran2009}, so our QPV demonstration still presents a genuine advantage over classical schemes.
As such, we believe that this is a reasonable tradeoff and assumption to make, but a careful choice of the instantiation (e.g. SHA3) of the hash function with sufficiently large input size is required to ensure security under QROM. This suggests that the barriers to QPV deployment at metropolitan distances are not fundamental and that our protocol design choices address the limitations that have constrained prior demonstrations.

A comparison between our demonstration and protocol with the prior demonstrations in shown in Fig.~\ref{fig:scaling_plot}. Kavuri et al.'s implementation~\cite{Kavuri2026} is secure above $2/3$ end-to-end efficiency and Fan Yuan et al.'s implementation~\cite{fan2026relativistic} is secure above $50\%$ efficiency for the link between the quantum verifier and the prover in the asymptotic regime with zero QBER. 
In this setting, their implementation is able to scale to a maximum of $6.0$km and $11.3$km respectively.
Our LT-QPV protocol is secure above $16\%$ efficiency for the prover's delay line and measurement station in the asymptotic regime at our current QBER. 
We compare the performance of our LT-QPV protocol for several configurations: a) system as-is, b) swapping NIR and telecom fibers, c) adding Ultra-Low Loss (ULL) fibers and matching the initial end-to-end efficiency of $81\%$ as Kavuri et al., and d) increasing the source heralding efficiency (HE) from $43\%\rightarrow70\%$ (commercially achievable), and adding future low-loss Hollow-Core Fibers (HCF), see Fig. \hbox{\ref{fig:scaling_plot}}. 
These results indicate that using commercially-available hardware ULL fiber, our LT-QPV protocol can scale to over 100 km, well within the range of metropolitan scale quantum networks.

\begin{figure*}[ht]
    \centering
    \includegraphics[width = 0.9 \textwidth]{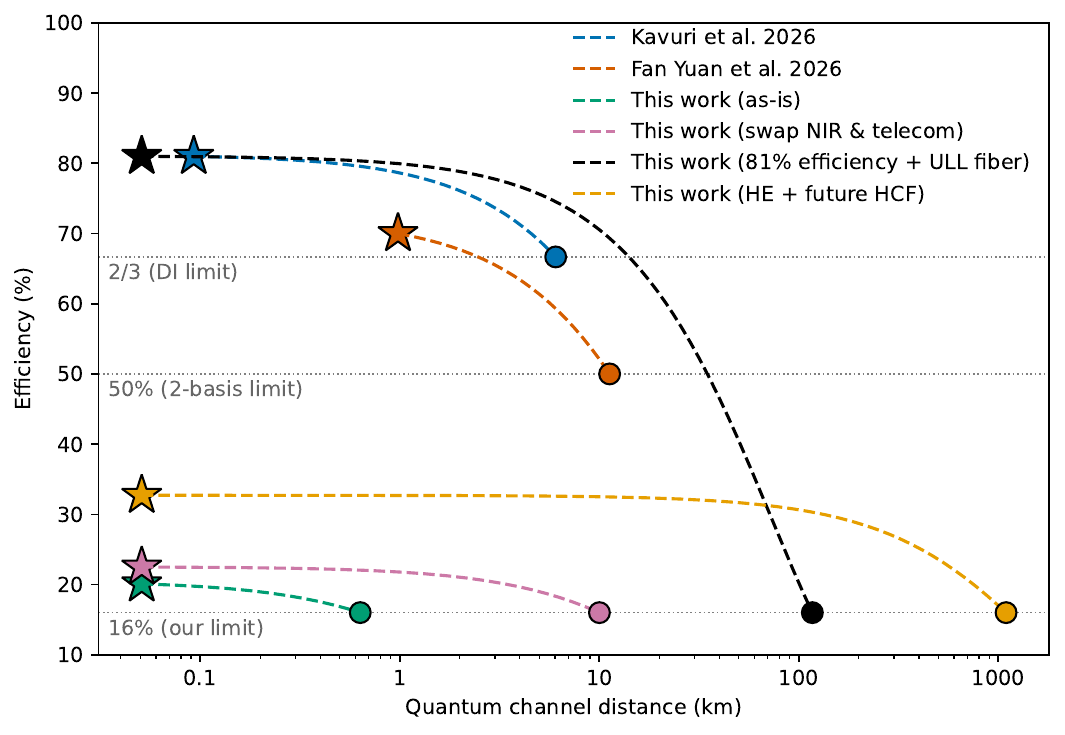}
    \caption{\textbf{Scaling performance of our protocol.} Prior QPV work is secure until the efficiency falls below $2/3$ for the Device Independent asymptotic limit (Kavuri et al.), or $50\%$ for the 2-basis limit (Fan Yuan et al.). Our LT-QPV scaling performance is shown for four configurations: a) system as-is, b) swapping Near-Infrared (NIR) and Telecom fibers, c) adding Ultra-Low Loss (ULL) fibers and matching the initial end-to-end efficiency of $81\%$ as the prior work (Kavuri et al.), and d) increasing the source heralding efficiency (HE) from $43\%\rightarrow70\%$, and adding future low-loss Hollow-Core Fibers (HCF). Stars denote initial operating point, while end points denote scaling limit.}
\label{fig:scaling_plot}
\end{figure*}

While LT-QPV addresses many of the outstanding challenges of QPV protocols, it still faces limitations and practical requirements.
A key protocol-level requirement is the necessity for the prover to store one qubit of an entangled pair (subsystem $Q$, see Fig.~\ref{fig:metro}\textbf{b}) until the classical challenges arrive and the measurement basis is determined. 
This storage time is constrained by the need for subsystem $V$ to arrive at the quantum verifier before challenge $y$ hypothetically would, with contributions from: 
(1) the latency of transmitting $y$ from $V_1^C$ to the prover, $\Delta t_{\mathrm{cl}}=d_{V_1^C\prover}\brac{\frac{1}{v_{\mathrm{cl}}} - \frac{1}{c}}$, and (2) the latency for transmitting $V$ from the prover to $V_0^Q$, $\Delta t_V=d_{V_0^Q\prover}\brac{\frac{1}{v_V} - \frac{1}{c}}$, where $d_{AB}$ represents the distances between $A$ and $B$ (see Supplementary Section III B 1 for details).
In favorable settings where both the classical communication and the quantum link operate close to $c$ (e.g., free-space or hollow-core-fiber segments), the required storage time can be made arbitrarily small without the need for a quantum memory. By contrast, in a practical setting where quantum transmission is performed over standard optical fiber with $2/3\cdot c$ transmission speed, the storage time scales linearly with distance.
In this regime, depending on the choice of quantum memory the required storage time can indirectly couple prover efficiency to channel loss, potentially affecting channel loss tolerance.
Nevertheless, even without long-lived, error-corrected quantum memories, passive optical delay (fiber spools or delay lines), as used in our experiment, or a commercially-available free-space photonic memory \cite{arnold2024all} can provide sufficient storage for LT-QPV to retain better loss-tolerance scaling than \cite{fan2026relativistic}.
For instance, using the same standard fiber in the delay loop with the same attenuation as the quantum channel can allow the loss in the fiber loop to be only $1/3$ of the fiber loss in the quantum channel (see Supplementary Section III B 1 for an example).

Our current demonstration has two further implementation-specific limitations that we expect can be addressed through hardware improvements achievable with existing or near-term technology. The first is the short length of the challenge signals (4 bits), which provides security only against adversaries with limited queries to $f$.
The choice of a small 4-bit challenge is due to the short distance of our demonstration, in which the additional latency required to compute $f$ for longer challenges would introduce a timing uncertainty that is too large for our setup.
At the metropolitan-scale separations the protocol is designed for, the correspondingly larger timing budget readily accommodates an extension to a 128- or 256-bit challenge and use of a cryptographic hash function such as SHA3~\cite{SHA3}, narrowing the gap with the ideal random oracle.

The second limitation is that the implementation is only secure against i.i.d. attacks but vulnerable to general ones. This is partly due to the slow switching speed of the verifier's liquid crystal retarders (LCRs), which leads to multiple consecutive valid rounds sharing the same verifier measurement basis that an adversary may exploit (more details in Supplementary Section V D). Replacing the LCRs with commercial Pockels cells would provide the fast per-round basis switching required for security against general attacks.
Even with such fast switching, security against general attacks imposes substantially stronger finite-size requirements.
With our current performance, a large number of valid rounds $N\approx5\times10^8$ would be needed to be secure against general attacks (see Supplementary Section V C for details).
We expect that a tighter security analysis against general attacks, which we leave as an open problem, could reduce these requirements, and that performance improvements could further reduce both the finite-size requirements and the time to accumulate the necessary rounds.
A separate improvement that can be made is to  tighten $\epssou$ to provide a stronger cryptographic guarantee by having a longer net collection time.
Selecting $\epssou=2^{-64}$ would require $N=1.5\times 10^6$ valid rounds and a net collection time of 130min while selecting $\epssou=10^{-6}$ would require $N=5.5\times 10^5$ valid rounds and a net collection time of 48min. Addressing both limitations, extending the challenge length and replacing the liquid crystals with fast Pockels cells, represents a clear and technically feasible path toward a QPV demonstration that is fully secure against general QPT adversaries.

Besides hardware improvements to address query size and security level, further safeguards are required to address possible side-channel attacks on the measurement device, which LT-QPV assumes to be trusted.
Depending on the implementation of the measurement devices, there are known side-channel attacks on quantum measurement devices~\cite{Jain2016,Xu2020,BSI_QKD_Attacks}, such as detector blinding~\cite{Makarov2009,Lars2010} and Trojan-horse attacks~\cite{Vakhitov2001,Gisin2006}, and various countermeasures have been proposed, such as introducing variable attenuation and photocurrent monitoring~\cite{Yuan2011,Koehler2018}, random detector efficiency~\cite{lim2015random}, or careful quantification of leakages~\cite{Pinheiro18}.
We leave a complete implementation of LT-QPV with suitable protection against side-channel attacks for future work. 

LT-QPV is a protocol intended for deployed networks, thus fast operation is an important performance metric limited by the number of required rounds and the hardware repetition rate. The prover efficiency currently limits the number of valid rounds for position verification. The warm-rubidium entanglement source used in our demonstration provides a native heralding efficiency of $\sim43\%$, which translates into a prover efficiency of $\eta \approx 20\%$ due to losses in the measurement apparatus.
The most direct route to improved performance is to increase the system detection efficiency by upgrading to higher-efficiency SNSPDs and reducing fiber-coupling losses. 
Reaching the experimentally demonstrated $\sim 81\%$ end-to-end detection efficiency~\cite{Kavuri2026} would boost the prover efficiency to $\eta \approx 35\%$.
This shifts the operating point to higher transmission at comparable QBER, reducing the number of valid rounds required by more than an order of magnitude (see Fig.~\ref{fig:Operating_Curve}). 
A further improvement in source heralding efficiency of $70\%$, which is commercially achievable, 
can further improve the prover-side efficiency to $>56\%$, requiring $N<2500$ rounds against i.i.d. adversaries and $N<5\times 10^4$ against general attacks.
Further improvement of the prover-side efficiency of LT-QPV to match the end-to-end efficiency of Ref.~\cite{Kavuri2026} (81\%) reduces the requirement to $N<5000$ against general attacks. 
Higher repetition rates are achievable by increasing the switching rate of the prover's Pockels cells. 
In principle, rates up to 100 MHz are possible, although 1-10 MHz is likely more realistic, giving a 5 to 50 times boost in rate. 
Switching the verifier's LCR to Pockels cells also allows removes the need for sifting, providing another 16 times boost in rate.
With these reasonable updates, a certification time $\lesssim1$ s is achievable against general attacks.  

With a credible path to sub-second certification times, we now present a conceptual design for deploying LT-QPV in metropolitan-area networks.
In the most pared-down version of our protocol, only one verifier requires quantum capabilities (a fiber-optic link, single-photon detectors, and photon polarization control). All other verifiers operate purely classically, exchanging challenges and receiving measurement outcomes via wireless or optical links. As illustrated
in Fig.~\ref{fig:metro}\textbf{a}, in a practical deployment a small number of quantum verifier nodes connected to the coverage area by standard telecom fiber could provide quantum measurements for a much larger number of provers within the network. Multiple classical verifier nodes, each
consisting of a time tagger and a synchronized clock, are
distributed across the metropolitan area at surveyed positions,
providing the geometric diversity required for three-dimensional
triangulation from the arrival times of the prover's classical
responses. Multiple provers can operate within
this shared infrastructure simultaneously, with each carrying
an entanglement source and a measurement module. The quantum
infrastructure thus reduces to a few nodes serving the entire
coverage area. The deployment scales through the addition of
inexpensive classical verifiers, components that are mature,
commercially available, and already deployed at scale for
applications such as 5G network synchronization and financial
exchange co-location. For a target metropolitan network distance of
$12$\,km (about half the length of Manhattan island), fiber losses at telecom
wavelengths are fully absorbed by the protocol's arbitrary loss
tolerance, requiring no quantum repeaters or entanglement
distillation. This architecture inverts the usual assumption in
quantum position verification that extending coverage requires
proportionally scaling the quantum hardware, making city-wide
deployment economically viable. 
A formal security analysis of LT-QPV beyond the one-dimensional (collinear) setting is required for a complete security treatment of such deployments, and we leave this for future work.
Moving to a network setting breaks the ideal collinear configuration of QPV, shown in Fig.~\ref{fig:metro}\textbf{b}, thereby increasing the necessary quantum storage time to $\Delta t_V\approx d_{V_0^QP}(\frac{1}{v_V}-\frac{\cos\theta}{c})$ as a necessary condition for security, where $\theta$ is the angle the classical verifier makes with the line connecting the quantum verifier to the prover (see \Cref{fig:metro}\textbf{c}).
However, since the angle $\theta$ scales inversely with the number of classical verifiers $k$ ($\theta \sim \pi/k$), the storage time can in principle be reduced to a level similar to that of the collinear case by a careful choice of verifiers' number and locations (more details in Supplementary section III B 2).

\begin{figure*}[ht]
    \centering
    \includegraphics[width = \textwidth]{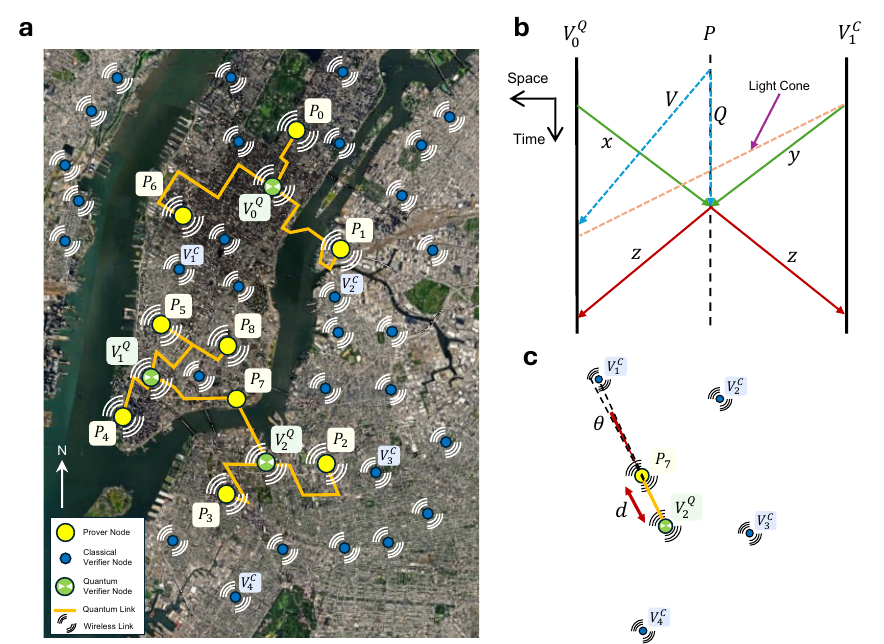}
    
    \caption{\textbf{a. Conceptual design of LT-QPV deployment architecture on a metropolitan scale. } 
    In a real-life realization of our LT-QPV protocol, multiple verifiers and prover nodes may be scattered around a metropolitan area, New York City as an example. 
    A few quantum verifier nodes (green) with measurement devices will be connected via fibers (black links) to several prover nodes (yellow) each.
    Classical verifiers (blue) can be scattered around the city. All classical communications can take place wirelessly. \textbf{b. Quantum storage time estimation in collinear setting.} Quantum verifier $\verA$ and classical verifier $\verB$ certifying the location $\ploc$ of a prover $P$. The timings have to be chosen such that challenge $y$ arrives at $\verA$ after the quantum system $V$, i.e. the blue arrow must always be above the orange dotted line (future light cone of the event where $\verB$ sends out challenge $y$). \textbf{c. Scaling of quantum storage time of the prover.} An example configuration of a prover, a quantum verifier, and a subset of classical verifiers (blue) to be used in the position verification protocol. The storage time $\Delta t_V + \Delta t_\mathsf{cl}$ depends on the distance $d$ between the prover and the quantum verifier, as well as the minimum angle $\theta$ between the prover, the quantum verifier, and a classical verifier. }
    \label{fig:metro}
\end{figure*}

The QPV protocol and deployment we propose suggests a concrete model for spacetime certification as a service. 
At the moment of signing a document, certifying a regulatory filing, or authenticating a high-value transaction, a client acting as the prover initiates the LT-QPV protocol: one photon from their entanglement source travels via fiber to a quantum verifier, while classical responses are timed against the surrounding classical verifiers to determine the prover's position. 
Upon successful verification, the verifiers jointly issue a digitally signed certificate attesting to the prover's position and the time of the event, a spacetime seal anchored to the laws of physics rather than to the honesty of the signee or to device integrity or self-reported metadata. 
This certificate can subsequently be verified by any relying party or can serve as the location attestation in TCC.
Realizing a spacetime seal service will require systematic protocol design and engineering developments, but this work sketches a concrete path: improving prover efficiency and valid-round rate through higher end-to-end detection efficiency and faster switching; strengthening robustness beyond i.i.d. operating regimes through rapid, per-round basis selection; extending the challenge length to better approximate the random-oracle ideal; and validating kilometer-scale operation over deployed urban fiber with multiple classical verifiers for full three-dimensional triangulation.
The LT-QPV architecture integrates naturally with emerging quantum networks in which entanglement is generated at network nodes and distributed for measurement, enabling LT-QPV to provide spacetime-seal certificates as one application running alongside other entanglement-enabled services, such as key distribution and distributed sensing, on the same infrastructure.

\section{Methods} \label{sec:methods}

\subsection{Protocol Description}

We provide an informal description of the LT-QPV protocol here, and the full protocol can be found in Supplementary Section III B. 
The protocol begins with repeating QPV rounds until there are $N$ valid rounds:
\begin{enumerate}
    \item \textbf{Quantum State Transfer}: The prover prepares a Bell state $\ket{\Phi^+}_{VQ}$. The prover sends quantum system $V$ to $\verA$ and keeps quantum system $Q$.
    \item \textbf{Verifier Message}: $\verA$ and $\verB$ randomly selects and sends $x\in\{0,1\}^n$ and $y\in\{0,1\}^n$ respectively such that they both arrive at $P$ at time $t_{P}$.
    \item \textbf{Verifier Measurement}: $\verA$ computes basis $\alpha=\hash(x,y)$ and measures $V$ in basis $\alpha$ with measurement outcome $s$ (which can include double-clicks, $s=dc$). If no photon is detected or if quantum system $V$ fails to arrive by $t_{V}$, the round is considered invalid.
    \item \textbf{Prover Measurement}: After receiving $x$ and $y$, the prover computes $\alpha=f(x,y)$ and measures $Q$ in basis $\alpha$, with measurement outcome $z$. If no photon is detected, the prover sets $z=\perp$. The prover sends $z$ to both verifiers ($\respA$ to $\verA$ and $\respB$ to $\verB$). 
    \item \textbf{Timing Check}: The verifiers check if the responses arrive within a threshold $\timeth$ of the expected arrival time. If any check fails, the protocol aborts immediately.
\end{enumerate}

Of the $N$ valid rounds, let $\Ndet$ be the number of rounds detected by the prover ($z\neq\perp$), $\Ndc$ be the number of double-click rounds ($s=dc$), $\Nerr$ be the number of error rounds ($s\neq z$), and $\Nmis$ be the number of rounds with mismatched responses ($\respA\neq\respB$). The verifiers conclude that the prover is at the claimed location if 
\begin{enumerate}
    \item Prover efficiency is above a certain threshold, $\Ndet/N\geq \etath$
    \item Effective QBER is below a certain threshold, $(\Ndc+\Nerr)/\Ndet\leq \errth$
    \item Probability of mismatched responses is below a threshold, $\Nmis/N\leq \pmisth$
\end{enumerate}

\subsection{Completeness} \label{sec:completeness}

Recall that the completeness error $\epscom$ is the probability that an honest prover fails the LT-QPV verification.
Consider an honest prover at location $X_P$ and verifiers with imperfect devices that results in transmission $\eta$, QBER of $e$, and double-click probability $p_{dc}$. 
Since classical communication is almost error-free, we assume that an honest prover can always provide matching responses and respect the timing constraints. 
The protocol proceeds in an i.i.d. manner, allowing us to use tight concentration bounds~\cite{Zubkov2013_SVBound} to bound the probability of failing the checks.
More formally,
\begin{theorem}
\label{thm:completeness}
    Consider an honest implementation with single-round transmission $\eta$, QBER $e$, double-click probability $\pdc$, and mismatch probability $\pmis$. 
    LT-QPV is $\epscom$-complete with completeness error
    \begin{equation*}
    \begin{split}
        \epscom=&\Phi\left(-\sqrt{2NH\left(\etath+\frac{1}{N},\eta\right)}\right)\\
        &+1-\Phi\left(\sqrt{2N\etath H\left(\errth-\frac{1}{N\etath},e'\right)}\right)\\
        &+1-\Phi\left(\sqrt{2NH\left(\pmisth-\frac{1}{N},\pmis\right)}\right),
    \end{split}
    \end{equation*}
    where $e'=p_{dc}+(1-p_{dc})e$, $H(x,p)=x\ln(x/p)+(1-x)\ln[(1-x)/(1-p)]$ is the Kullback-Leibler divergence for a binary distribution, $\Phi(x)$ is the cumulative distribution function of a standard normal distribution, $\etath$, $\errth$ and $\pmisth$ are the threshold values for transmission, QBER, and mismatch probability and are selected such that $\etath\leq\eta-\frac{1}{N}$, $\errth\geq e'+\frac{1}{N\etath}$ and $\pmisth\geq\pmis+\frac{1}{N}$.
\end{theorem}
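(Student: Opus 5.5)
Our approach is a union bound over the three acceptance checks, with each failure probability controlled by the Zubkov--Serov binomial tail bound \cite{Zubkov2013_SVBound}. That bound states that for $S\sim\mathrm{Bin}(n,p)$ and integer $k$ with $k/n\ge p$ (so that $k$ lies in the upper tail),
\begin{equation*}
\Pr[S\geq k]\leq 1-\Phi\left(\sqrt{2nH(k/n,p)}\right).
\end{equation*}
By symmetry, $\Pr[S\le k]\le \Phi(-\sqrt{2nH(k/n,p)})$ when $k/n\le p$. In the honest implementation the rounds are i.i.d., so $\Ndet\sim\mathrm{Bin}(N,\eta)$ and $\Nmis\sim\mathrm{Bin}(N,\pmis)$. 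The protocol aborts only if at least one check fails, so $\epscom$ is at most the sum of the three failure probabilities. The $\pm 1/N$ shifts will come from passing from a strict inequality on a real threshold to a non-strict inequality on an integer count, together with the monotonicity of $H(\cdot,p)$ away from $p$.

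\textbf{Efficiency check.} It fails iff $\Ndet<N\etath$, which implies $\Ndet\le \lceil N\etath\rceil-1\le N\etath$. To match the stated form, we bound by the event $\Ndet\le N\etath+1$, which contains the failure event. Since $\etath+1/N\le\eta$ by hypothesis, the lower-tail bound applies and gives $\Phi(-\sqrt{2NH(\etath+1/N,\eta)})$. The slack is harmless because $H(x,\eta)$ decreases as $x\uparrow\eta$, so enlarging the threshold only weakens the bound.

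\textbf{Mismatch check.} It fails iff $\Nmis>N\pmisth$, which implies $\Nmis\ge N\pmisth-1$ after rounding. Since $\pmisth-1/N\ge\pmis$, the upper-tail bound gives the third term.

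\textbf{QBER check.} This is the main obstacle, because the denominator $\Ndet$ is itself random. We plan to condition on the detected rounds. Each detected round independently contributes an effective error, meaning a double click or else a disagreement, with probability $e'=\pdc+(1-\pdc)e$. Conditional on $\Ndet=d$, the quantity $\Nerr+\Ndc$ is therefore $\mathrm{Bin}(d,e')$.

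The check fails iff $\Nerr+\Ndc>\errth d$. We treat two cases.
\begin{itemize}
\item If $d<N\etath$, the efficiency check already fails, and that event is counted in the first term.
\item If $d\ge N\etath$, the probability $\Pr[\mathrm{Bin}(d,e')>\errth d]$ is nonincreasing in $d$ for a fixed ratio threshold with $\errth>e'$; the exponent $dH(\errth,e')$ grows with $d$. It is therefore bounded by its value at $d=N\etath$, with a $-1/(N\etath)$ correction from the rounding of $\errth N\etath$.
\end{itemize}
The hypothesis $\errth-1/(N\etath)\ge e'$ keeps us in the upper tail, and Zubkov--Serov then yields the middle term.

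The care needed is to justify the monotonicity in $d$ (or to replace it by a direct bound valid for all $d\ge N\etath$) and to handle the integer rounding consistently. Once that is done, summing the three terms gives the stated $\epscom$.
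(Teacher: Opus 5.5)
Your architecture matches the paper's: a union bound over the three checks, i.i.d.\ binomial counts, and the Zubkov--Serov bound for each check. Charging $\{\Ndet<N\etath\}$ to the first term is also the right way to handle the random denominator.

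The genuine gap is the concentration lemma itself, which is false in both tails. For $n=2$, $p=1/2$, $k=1$ it asserts $\Pr[S\ge 1]\le\frac12$ and $\Pr[S\le 1]\le\frac12$, while both probabilities equal $\frac34$. The Zubkov--Serov inequality states, for integers $0\le k\le n-1$,
\begin{equation*}
\Phi\Bigl(\mathrm{sgn}\bigl(\tfrac{k}{n}-p\bigr)\sqrt{2nH\bigl(\tfrac{k}{n},p\bigr)}\Bigr)\le\Pr[S\le k]\le\Phi\Bigl(\mathrm{sgn}\bigl(\tfrac{k+1}{n}-p\bigr)\sqrt{2nH\bigl(\tfrac{k+1}{n},p\bigr)}\Bigr),
\end{equation*}
which implies the \emph{reverse} of both your inequalities. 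The usable tail bounds carry a unit continuity shift:
\begin{itemize}
\item $\Pr[S\le k]\le\Phi(-\sqrt{2nH((k+1)/n,p)})$ when $(k+1)/n\le p$;
\item $\Pr[S\ge k]\le1-\Phi(\sqrt{2nH((k-1)/n,p)})$ when $(k-1)/n\ge p$.
\end{itemize}
This shift is where the $\pm\frac1N$ and $-\frac{1}{N\etath}$ in the theorem come from. The hypotheses $\etath\le\eta-\frac1N$, $\pmisth\ge\pmis+\frac1N$ and $\errth\ge e'+\frac{1}{N\etath}$ are exactly its sign conditions.

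Your first and third terms come out right only because the false lemma is offset by enlarging the failure events to $\{\Ndet\le N\etath+1\}$ and $\{\Nmis\ge N\pmisth-1\}$. With the correct lemma, those enlarged events give $H(\etath+\frac2N,\eta)$ and $H(\pmisth-\frac2N,\pmis)$. They would then need $\etath\le\eta-\frac2N$ and $\pmisth\ge\pmis+\frac2N$, which the theorem does not assume. Use the exact failure events instead:
\begin{itemize}
\item $\{\Ndet\le\lceil N\etath\rceil-1\}$ gives argument $\lceil N\etath\rceil/N<\etath+\frac1N\le\eta$;
\item $\{\Nmis\ge\lfloor N\pmisth\rfloor+1\}$ gives argument $\lfloor N\pmisth\rfloor/N>\pmisth-\frac1N\ge\pmis$.
\end{itemize}
Monotonicity of $H(\cdot,p)$ on either side of $p$ then yields the stated terms.

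In the QBER step, the claim that $d\mapsto\Pr[\mathrm{Bin}(d,e')>\errth d]$ is nonincreasing is false because of lattice effects. With $e'=0.1$ and $\errth=0.5$ it equals $0.01$ at $d=2$ and $0.028$ at $d=3$. Growth of the exponent $dH(\errth,e')$ controls a bound, not the exact probability.

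Take your fallback instead; it goes through uniformly once the lemma is corrected. For any $d\ge N\etath$, the failure event is $\{\Nerr+\Ndc\ge\lfloor\errth d\rfloor+1\}$. Its conditional probability is at most $1-\Phi(\sqrt{2dH(\lfloor\errth d\rfloor/d,e')})$. Since $\lfloor\errth d\rfloor/d>\errth-\frac1d\ge\errth-\frac{1}{N\etath}\ge e'$ and $d\ge N\etath$, you get $dH(\lfloor\errth d\rfloor/d,e')\ge N\etath\, H(\errth-\frac{1}{N\etath},e')$. Averaging over $\Ndet=d\ge N\etath$ then gives the middle term.
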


\begin{proof}[Proof Sketch.]
The probability of failing any check is upper bounded by the sum of probability that each check failed.
Since the honest party operates in an i.i.d manner, we bound the probability of each check failing using tight concentration bound from Ref.~\cite{Zubkov2013_SVBound}.
\end{proof}

\subsection{Soundness}

Recall that soundness is the probability that an adversary without a presence at the claimed location passes the LT-QPV verification.
\begin{theorem} \label{thm:soundeness_iid}
    LT-QPV is $\epssou$-sound against i.i.d. query-bounded adversaries in the quantum random oracle model, i.e.
    \begin{equation*}
        \max_{\advstrat\in\advset_{QPT,iid},\measstratprat}\Pr[checks\, pass]\leq\epssou,
    \end{equation*}
    where the maximum is taken over adversary strategies $\advstrat$ that are in the set of QPT and i.i.d. strategies $\advset_{QPT,iid}$,
    \begin{equation*}
    \begin{split}
        \epssou=&\varepsilon_{ROM}+\\
        \max&\left\{1-\Phi\left(\sqrt{2NH\left(\etath-\frac{1}{N},\etab\right)}\right),\right.\\
        &\Phi\left(-\sqrt{2NH\left(\pmisth+\frac{1}{N},\pmisb\right)}\right),\\
        &\left.\Phi\left(-\sqrt{2NH\left(\errth+\frac{1}{N},e_{SDP}\right)}\right)\right\}
    \end{split}
    \end{equation*}
    for any choice of parameters $\etab\leq\etath-\frac{1}{N}$ and $\pmisb\geq\pmisth+\frac{1}{N}$, and where $e_{SDP}(\etab,\pmisb)$ is the solution to the SDP in Equation (IV.19), with the added condition that $e_{SDP}\geq\errth+\frac{1}{N\etath}$.
\end{theorem}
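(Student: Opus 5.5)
The plan is to reduce soundness to a single-round statement about i.i.d. adversaries, and then finish with a concentration argument. Structurally the bound mirrors the completeness proof of Theorem~\ref{thm:completeness}.

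\textbf{Step 1: replace the hash by an oracle independent of $V$.} Under the QROM, with a query-bounded QPT adversary, $f$ is replaced by a truly random function. We pay $\varepsilon_{ROM}$ for the event that the adversaries learn $\alpha$ before $t_P$, i.e.\ for a pre-image or early-query attack; this comes from a standard one-way-to-hiding or query-counting bound. Outside this event, the causal structure of Fig.~\ref{fig:1}\textbf{b} applies: the state of $V$ reaching $\verA$, and hence its detection decision, are fixed independently of $\alpha$. The trusted measurement device, squashed via universal squashing~\cite{FCL11} with double clicks mapped to errors, lets us model $V$ as a qubit (or vacuum). Discarding no-detection rounds at $\verA$ therefore does not bias the conditional statistics.

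\textbf{Step 2: describe a single round and bound its error by an SDP.} In an i.i.d. attack, each valid round is described by one strategy. Two colluding adversaries near $\verA$ and $\verB$ share a state $\rho_{VA_0A_1}$ and must each output $z_0,z_1\in\{0,1,\perp\}$ knowing only half of $(x,y)$ before signalling. Let $\eta$ be the per-round probability of $z\neq\perp$ (with matching responses), $p_{\mathsf{mis}}$ the mismatch probability, and $e$ the error probability conditioned on detection, $s\neq z$. The monogamy-of-entanglement/guessing-game structure~\cite{Tomamichel2013} then gives
\[
e\ge e_{SDP}(\eta,p_{\mathsf{mis}}),
\]
where $e_{SDP}$ is the optimum of SDP (IV.19): minimise the error subject to $\eta\ge\etab$ and $p_{\mathsf{mis}}\le\pmisb$. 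The SDP is monotone in $\etab$ and $\pmisb$.

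\textbf{Step 3: case split on the adversary's single-round parameters.}
\begin{itemize}
\item If $\eta<\etab$, then $\Ndet$ is binomial with mean below $N\etab$. Passing requires $\Ndet\ge N\etath$, and the concentration bound~\cite{Zubkov2013_SVBound} bounds this by the first term; the $1/N$ shift handles discreteness.
\item If $p_{\mathsf{mis}}>\pmisb$, passing requires $\Nmis\le N\pmisth$, which gives the second term.
\item Otherwise $e\ge e_{SDP}$. Passing requires the QBER to be at most $\errth$ with $\Ndet\ge N\etath$; the lower-tail bound gives the third term.
\end{itemize}
Since the adversary's parameters fall in one of these cases, the probability of passing is at most the maximum of the three terms, plus $\varepsilon_{ROM}$.

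\textbf{Main obstacle.} I expect the hardest part to be the third term. $\Nerr+\Ndc$ is counted only over the random number $\Ndet$ of detected rounds, so it is not a fixed-size binomial. I would handle this by defining per-round indicators ``detected and erroneous''. Passing then implies that fewer than $\errth\Ndet\le\errth N$ such events occur, while their mean is at least $\eta e_{SDP}\ge\ldots$. Getting the stated form $2NH(\errth+1/N,e_{SDP})$ probably requires a careful coupling with the conditional error rate, or an argument that the worst case is $\eta=1$. A second technical point is justifying rigorously that post-selection on $\verA$'s detection is independent of $\alpha$ in the random oracle model, including multi-photon inputs.
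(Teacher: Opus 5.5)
Your overall architecture matches the paper's. It has the same ingredients: squashing and the QROM reduction, loss-independence of post-selecting on $\verA$'s detection, a single-round SDP relation between $\eta$, $\pmis$ and the QBER, and a three-way case split with binomial tail bounds that yields a maximum rather than a sum. The first two terms of $\epssou$ come out exactly as you describe.

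The genuine gap is the third term, and the route you sketch for it fails. Counting ``detected and erroneous'' indicators over all $N$ rounds and relaxing $\errth\Ndet\le\errth N$ throws away a factor $\eta$. Those indicators have mean $\eta e$, so whenever $\eta e_{SDP}\le\errth$ the resulting tail bound is vacuous. That is the operating regime: at $\eta\approx 0.2$ and $\errth\approx 1.3\%$ you would need $e>6.5\%$. Your guess that the worst case is $\eta=1$ is also backwards. The adversary most likely to pass the QBER check is the one with the fewest detected rounds still compatible with the efficiency check, i.e.\ $\Ndet\approx N\etath$.

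The missing idea is to use the i.i.d.\ structure to condition on the detection pattern. The pair (detected, erroneous) is i.i.d.\ across rounds. So, conditioned on $\Ndet=n$, the number of errors and double clicks among detected rounds is $\mathrm{Bin}(n,e)$ with $e=\Pr[\text{error or dc}\mid z\neq\perp]\ge e_{SDP}$. Passing requires $n\ge N\etath$ and at most $\lfloor\errth n\rfloor$ such events. The Zubkov--Serov upper bound then gives $\Phi\bigl(-\sqrt{2nH(\errth+\tfrac1n,e)}\bigr)$, valid as long as $\errth+\frac1n\le e$. This expression decreases in $n$ and in $e$. Maximising over $n\ge N\etath$ and $e\ge e_{SDP}$ therefore yields $\Phi\bigl(-\sqrt{2N\etath H(\errth+\frac{1}{N\etath},e_{SDP})}\bigr)$, exactly parallel to the QBER term of Theorem~\ref{thm:completeness}.

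This is why the hypothesis $e_{SDP}\ge\errth+\frac{1}{N\etath}$ appears: it is the sign condition for this bound at $n=N\etath$. Consequently the $N$ and $\frac1N$ in the displayed third term should be read as $N\etath$ and $\frac{1}{N\etath}$. No coupling argument will give the literal form, because an adversary with $\eta$ near $\etath$ accumulates errors over only about $N\etath$ rounds. The paper counts errors over all $N$ rounds only for general attacks, and only after replacing the check by $\Ndc+\Nerr\le N\tilde e_{\mathsf{th}}$.
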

The proof of soundness is split into five distinct steps, with the first three steps simplifying the analysis and the final two steps proving directly that the adversary cannot pass the checks.
\begin{itemize}
    \item \textbf{STEP 1:} Apply the universal squashing model~\cite{FCL11} to reduce to a scenario where the first verifier's measurement device only accepts single-qubit inputs. 
    \item \textbf{STEP 2:} Invoke the QROM and computational assumption to ensure that basis information is only revealed at $(t_P,X_P)$ and to rule out attacks that exploit the structure of function $f$.
    \item \textbf{STEP 3:} Demonstrate that the soundness is independent of channel loss, allowing the no-detection rounds at $V_1$ to be safely discarded.
    \item \textbf{STEP 4:} Derive a relationship between the single-round expected QBER, probability of mismatch of responses and transmission for any adversary strategy. 
    \item \textbf{STEP 5:} Using statistical methods such as concentration bounds, demonstrate that given the relationship between the expectation values, no adversary can pass the checks except with low probability.
\end{itemize}
The detailed proof is provided in Supplementary Section IV C.

The security proof extends to general QPT adversaries by the same argument, with two key modifications: (i) the effective QBER check is replaced by $\Ndc+\Nerr\leq N\tilde{e}_{\mathsf{th}}$, and (ii) statistical analysis in Step 5 is carried out using conditional expectation values instead of unconditional ones.
The soundness against general adversaries is presented in Supplementary Theorem V.2, and the detailed proof is provided in Supplementary Section V B.

\subsection{Numerical Analysis}

To evaluate the parameter regime where QPV is secure, we numerically simulate the minimum performance that an honest prover has to achieve for a given transmission $\eta$. 
The results are shown in Fig.~\ref{fig:Operating_Curve}, with security parameters $\varepsilon_{com}=10^{-2}$ and $\varepsilon_{sou}=10^{-2}$ for various number of rounds.
The completeness budget $\varepsilon_{com}$ is distributed with $0.1\varepsilon_{com}$ attributed to the statistical fluctuations of each of $\eta$ and $\pmis$ and $0.8\varepsilon_{com}$ attributed to the statistical fluctuations of QBER (terms of the completeness in \Cref{thm:completeness}).
The simulation is carried out by solving the semidefinite program (SDP) in Equation (IV.19), where we use ncpol2sdpa~\cite{Ncpol2sdpa} and PICOS~\cite{picos} interface with QICS~\cite{qics} solver.

\subsection{Experimental Setup}

An overview of our experimental setup is presented in Fig. \ref{fig:1}\textbf{a}. The setup consists of two trusted verifier nodes and one untrusted prover node connected by a quantum and a classical layer. 

The core of the quantum layer is constituted by a warm rubidium entanglement source \cite{craddock2024high} placed at the prover's location. The source provides entangled photon pairs at 795 nm (near infrared, or NIR) and 1324 nm (telecom) in the $|\Phi^+\rangle = \frac{1}{\sqrt{2}}(|HH\rangle+|VV\rangle)$ Bell state. The specific requirement of our QPV protocol to maximise the NIR heralding efficiency led us to modify the source with respect to previous work. Enriched $^{85}$Rb is used as the medium for the four-wave mixing, and the pumping is enacted by lasers at 780 nm and 1367 nm. 
The source is operated such that, at the origin, it outputs about 4.2 million NIR photons/s, 5.3 million telecom photons/s, and 2.1 million coincidences/s leading to a $43 \%$ heralding efficiency (defined as the rate of coincidences divided by the rate of telecom photons). The effective signal-idler cross-correlation for the coincidence window we use in the experiment is $g_{si}\approx 190$, leading to a minimum achievable QBER of $\text{QBER}_{\text{min}}\approx 0.52\,\%$. 

After their generation, the photons from the entanglement source are coupled into optical fibers. The NIR photon is brought via a 42 m delay line to its measurement station comprised of two Pockels cells, a half-wave plate (HWP), a quarter-wave plate (QWP), and a polarizing beam splitter (PBS). The Pockels cells and wave plates are used to set the measurement in one of the 16 bases used in our protocol. After the PBS, the photons are coupled into optical fibers and routed to superconducting nanowires single photon detectors (SNSPDs).

The telecom photon serves as an agreement by the prover to perform a QPV round and is transmitted from the prover to verifier 0, where it leads to a second measurement station where two liquid crystal retarders (LCRs) at $45^\circ$ to each other are used to set arbitrary measurement bases and a PBS and another set of SNSPDs are used for projective measurements at the verifier's location. 

The classical layer is coordinated by three FPGAs connected by high-speed copper-shielded coaxial air-core plenum cables that can transmit signals at $0.88\cdot c$ \cite{AirCoreCoaxDatasheet_v1_6}, where $c$ is the speed of light. 
Verifiers $V_0$ and $V_1$ transmit classical messages $x$ and $y$ consisting of 4-bit binary numbers that are chosen uniformly at random from $\{0,1\}^4$ to the prover, respectively.
The verifiers coordinate the transmission of $x,y$ such that they arrive simultaneously at the prover's claimed location. Upon receiving classical values of $x,y$ the prover selects the measurement basis from a random lookup table with 16 possible measurement bases, and sends the appropriate control signals to fire the Pockels cells. After passing through the Pockels cells and measurement optics, the photon is detected by SNSPDs. The classical measurement result is relayed back to each verifier using high-speed air-core cables. All the signals necessary for the security determination ($x$ and $y$ values, Bell state measurement results, synchronization clocks) are logged onto two time taggers at the verifiers. 
Any rounds with mismatched prover and verifier measurement bases are discarded.

Our data collection system currently takes approximately 22.7 s to collect 5 s of data. This overhead arises from two sources: first, a conservative 15 s settling period allotted for the LCR voltages to stabilize prior to QBER measurement, and second, the remainder from the data transfer method. Neither represents a fundamental limitation. In practice, the LCRs settle within approximately 1 s, and the transfer overhead can be eliminated by utilizing the network features of the Swabian time tagger, such that 5 s of data would require only 6 s in total. 

\begin{figure}[ht]
    \centering
    \includegraphics[width = \columnwidth]{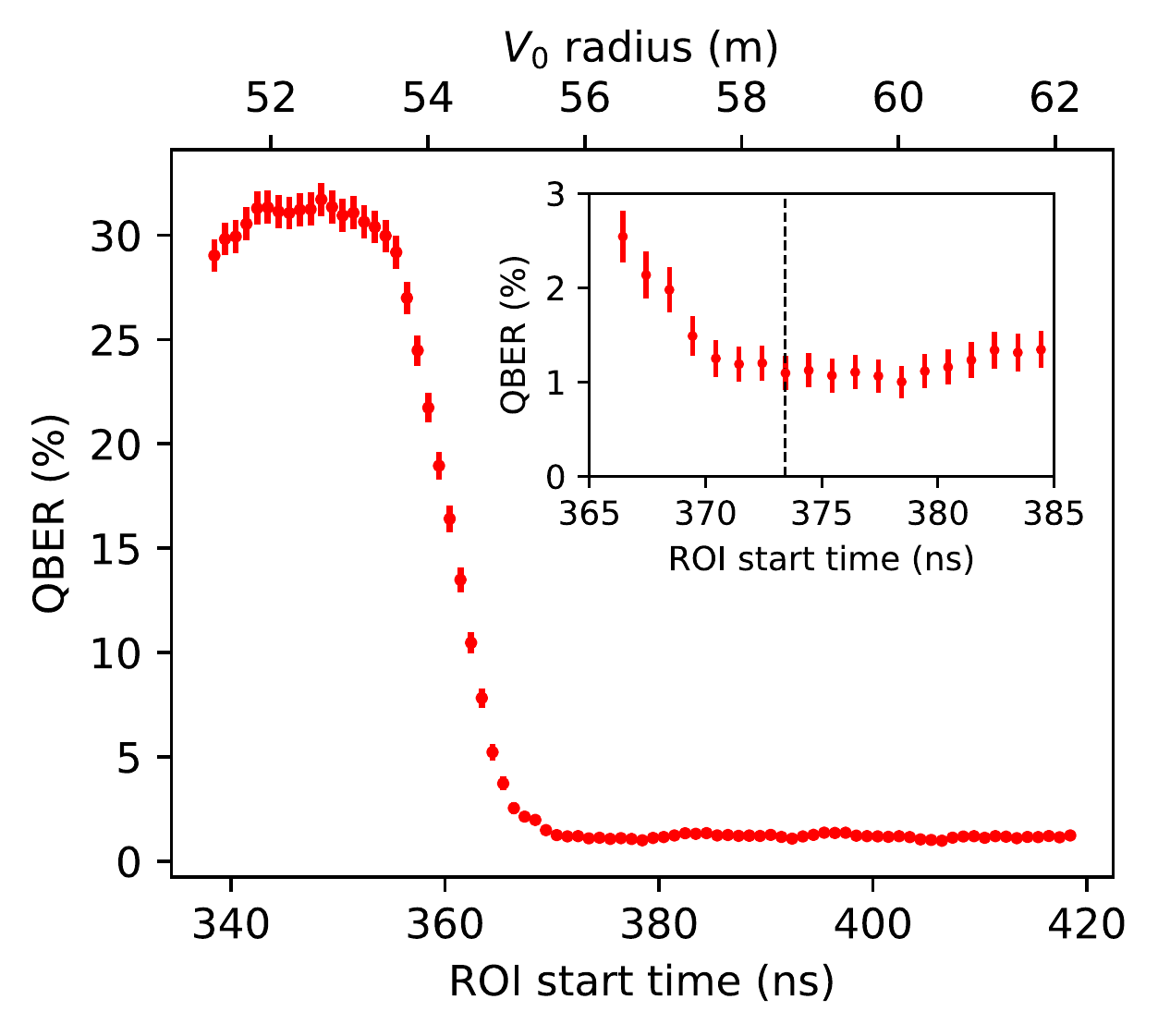}
    \caption{ROI start time tuning accounts for the delays in the prover's quantum measurement. Specifically, we place the ROI start time (the dashed vertical line in the inset plot) immediately after the Pockels cell has finished settling, which corresponds to low QBER, and reasonably low positional uncertainty. Moving the ROI start time earlier could achieve improved positional uncertainty, but at the expense of higher QBER.}
    \label{fig:ROI_Time}
\end{figure}

Central to the positional determination of QPV are the operational delays at the prover. Measurement delays including receiving and processing the \textit{x} and \textit{y} values, processing latencies in determining the requested measurement basis, firing the Pockels cells, and detecting the projected single photon result, all increase the uncertainty radius around each verifier (specific details are provided in the Supplementary Information). Instead of predicting the sum of these effects, the verifiers have the ability to set the region of interest (ROI) where to accept photon measurements to optimize experimental parameters, namely positional uncertainties and QBER. This allows us to select photons that reach the Pockels cells right after the measurement basis is settled, minimizing delays. To determine the ROI, we perform the measurement shown in Fig. \ref{fig:ROI_Time}, measuring the QBER with varying ROI start time after the prover is ready to receive commands (earlier start times lead to improved positional uncertainties, but at the expense of higher QBER). 

\subsection{Photon Losses} \label{sec:Source Characterization}

The quantum signals experience various losses before being measured and converted into electrical signals logged into the time taggers. The NIR losses directly affect the level of security of the protocol. The telecom losses only affect the time it takes to certify the positional verification. The cumulative efficiency of the NIR and telecom channels respectively were measured to be $46(5)\,\%$ and $41(5)\,\%$. A full breakdown is provided in the Supplementary information.

\bibliographystyle{naturemag}
\bibliography{refs}

\section*{Acknowledgments}

Special thanks to Giuseppe Di Cera (JPMorganChase) for detailed insights and discussions on quantum position verification and quantum networking technologies, notably their potential commercial applications. We thank Brett Sanford (JPMorganChase) for administrative support for this project. We thank Marco Pistoia for assisting with the internal patent filing approval process for patent IDF-2024-0733 during his tenure at JPMorganChase. C.L. and I.W.P. contributed to this work during their tenure at JPMorganChase. Additionally, a major thank you to Noel Goddard of Qunnect, without whom this collaboration would not have been possible.

\section*{Competing Interests}
W.Y.K., I.W.P., C.L., and K.C. are co-inventors on a patent application related to this work (U.S. Patent US-12675596-B2 ``Systems and methods for source-independent quantum position verification").
A.C., T.S., J.S., O.A. are co-inventors on a patent application related to this work (U.S. Patent Application 63/782,777 ``Low-Latency Wireless Communication").

\section*{Disclaimer}
This paper was prepared for informational purposes with contributions from the Global Technology Applied Research center of JPMorgan Chase \& Co. This paper is not a product of the Research Department of JPMorgan Chase \& Co. or its affiliates. Neither JPMorgan Chase \& Co. nor any of its affiliates makes any explicit or implied representation or warranty and none of them accept any liability in connection with this paper, including, without limitation, with respect to the completeness, accuracy, or reliability of the information contained herein and the potential legal, compliance, tax, or accounting effects thereof. This document is not intended as investment research or investment advice, or as a recommendation, offer, or solicitation for the purchase or sale of any security, financial instrument, financial product or service, or to be used in any way for evaluating the merits of participating in any transaction.

\newpage

\foreach \x in {1,...,33}
{
\clearpage
\includepdf[pages={\x}]{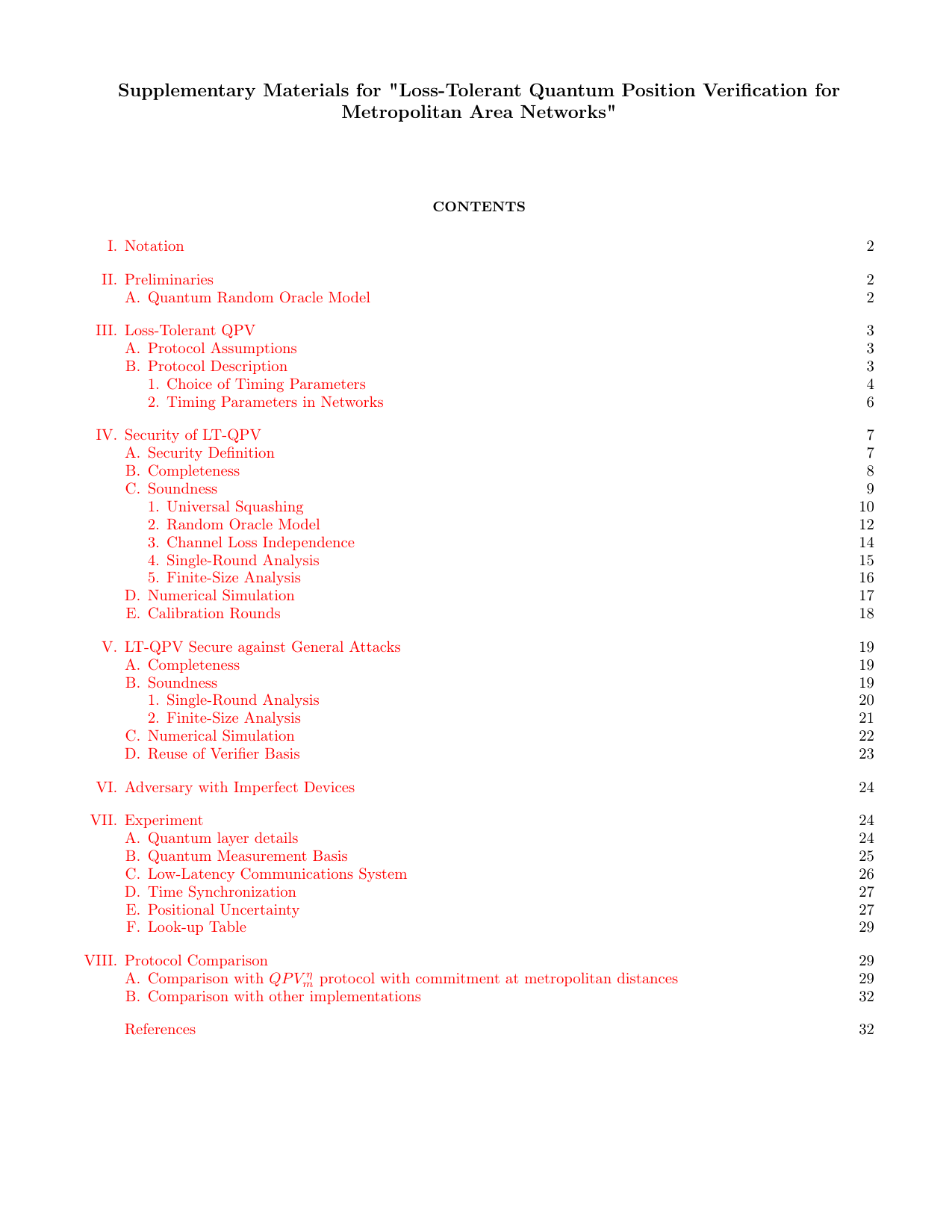} 
}

\end{document}